\def\diag{{\rm diag}}
\def\Mat{{\rm Mat}}
\def\G{{\rm G}}
\def\g{\frak{g}}
\def\T{{\rm T}}
\def\SU{{\rm SU}}
\def\U{{\rm U}}
\def\su{\frak{su}}
\def\C{{\mathbb C}}
\def\R{{\mathbb R}}
\def\SO{{\rm SO}}
\def\OO{{\rm O}}
\def\Spin{{\rm Spin}}
\def\tr{{\rm tr}}
\def\rank{{\rm rank}}
\title{On constant solutions of $\SU(2)$ Yang-Mills equations with arbitrary current\\ in Euclidean space $\R^n$}
\author{Dmitry Shirokov}
\address{National Research University Higher School of Economics,\\
Myasnitskaya str. 20, Moscow, 101000, Russia\\
\email{dm.shirokov@gmail.com}}
\address{Institute for Information Transmission Problems of Russian Academy of Sciences,\\
Bolshoy Karetny per. 19, Moscow, 127051, Russia}
\begin{document}

\maketitle
\thispagestyle{empty}

\vphantom{\vbox{%
\begin{history}
\received{(Day Month Year)}
\revised{(Day Month Year)}
\accepted{(Day Month Year)}
\end{history}
}}

\begin{abstract}
In this paper, we present all constant solutions of the Yang-Mills equations with $\SU(2)$ gauge symmetry for an arbitrary constant non-Abelian current in Euclidean space $\R^n$ of arbitrary finite dimension $n$. Using the invariance of the Yang-Mills equations under the orthogonal transformations of coordinates and gauge invariance, we choose a specific system of coordinates and a specific gauge fixing for each constant current and obtain all constant solutions of the Yang-Mills equations in this system of coordinates with this gauge fixing, and then in the original system of coordinates with the original gauge fixing. We use the singular value decomposition method and the method of two-sheeted covering of orthogonal group by spin group to do this. We prove that the number ($0$, $1$, or $2$) of constant solutions of the Yang-Mills equations in terms of the strength of the Yang-Mills field depends on the singular values of the matrix of current. The explicit form of all solutions and the invariant $F^2$ can always be written using singular values of this matrix. The relevance of the study is explained by the fact that the Yang-Mills equations describe electroweak interactions in the case of the Lie group $\SU(2)$. Nonconstant solutions of the Yang-Mills equations can be considered in the form of series of perturbation theory. The results of this paper are new and can be used to solve some problems in particle physics, in particular, to describe physical vacuum and to fully understand a quantum gauge theory.
\end{abstract}

\keywords{Yang-Mills equations; singular value decomposition; cubic equations; constant solutions; $\SU(2)$.}
\ccode{2000 Mathematics Subject Classification: 70S15}

\section{Introduction}

Up to now the law of elementary particles physics is given by quantum gauge theories \cite{Fad}. We need exact solutions of classical Yang-Mills equations to describe the vacuum structure of the theory and to fully understand a quantum gauge theory \cite{nian}. During the last 50 years, many scientists have been searching for particular classes of solutions of the Yang-Mills equations. The well-known classes of solutions of the Yang-Mills equations are described in detail in various reviews \cite{Actor}, \cite{Zhdanov}. Only certain (nontrivial) classes of particular solutions of these equations are known because of their nonlinearity: monopoles \cite{WYa}, \cite{tH}, \cite{Pol}, instantons \cite{Bel}, \cite{Wit}, \cite{ADHM}, merons \cite{deA}, etc.

The main result of this paper is the presentation of all constant (that do not depend on $x\in\R^n$) solutions of the Yang-Mills equations with $\SU(2)$ gauge symmetry for an arbitrary constant non-Abelian current in Euclidean space of arbitrary finite dimension. The relevance of the study is explained by the fact that the Yang-Mills equations describe electroweak interactions in the case of the Lie group $\SU(2)$. Note that instantons are solutions in Euclidean space-time (with imaginary time) and thus the Euclidean case (not only Minkowski case) is important for applications. Constant solutions of the Yang-Mills equations are essentially nonlinear solutions and, from this point of view, are particularly interesting for applications.

Constant solutions of the Yang-Mills equations with zero current were considered in \cite{Sch} and \cite{Sch2}. In \cite{Sch}, Prof. R. Schimming wrote: \emph{``The following problems concerning constant Yang-Mills fields are actual ones in our opinion: Is there a gauge- and coordinate-invariant characterization of those Yang-Mills fields which admit constant potentials with respect to some gauge and some coordinate system? Find as many as possible (in the ideal case: all) constant Yang-Mills fields and classify them! \ldots''} In the current paper, we give a complete answer to these questions in the case of the Lie group $\SU(2)$. Our results for an arbitrary current are consistent with the results of \cite{Sch} and \cite{Sch2} for zero current (and arbitrary compact Lie algebra).

In this paper, we present the general solution of the special system (system of the $\SU(2)$ Yang-Mills equations for constant solutions with arbitrary current) of $3n$ cubic equations with $3n$ unknowns and $3n$ parameters. This algebraic problem is solved using the singular value decomposition method and the method of two-sheeted covering of orthogonal group by spin group. Using the invariance of the Yang-Mills equations under the orthogonal transformations of coordinates and gauge invariance, we choose a specific system of coordinates and a specific gauge fixing for each constant current and obtain all constant solutions of the Yang-Mills equations in this system of coordinates with this gauge fixing, and then in the original system of coordinates with the original gauge fixing.

\section{The main ideas}

Let us consider Euclidean space $\R^n$ of arbitrary finite dimension $n$. We denote Cartesian coordinates by $x^\mu$, $\mu=1, \ldots, n$ and partial derivatives by $\partial_\mu=\partial/{\partial x^\mu}$.

Let us consider the Lie group
\begin{eqnarray}
\G=\SU(2)=\{S\in\Mat(2,\C) \,| \, S^\dagger S= {\bf 1}, \det S=1\},\qquad \dim G=3
\end{eqnarray}
and the corresponding Lie algebra
\begin{eqnarray}
\g=\su(2)=\{S\in\Mat(2,\C) \,| \,S^\dagger=-S, \tr S=0\},\qquad \dim\g=3.
\end{eqnarray}

Denote by $\g\T^a_b$ a set of tensor fields of $\R^n$ of type $(a,b)$ with values in the Lie algebra $\g$. The metric tensor of $\R^n$ is given by the identity matrix
${\bf 1}=\diag(1,\ldots,1)=\|\delta_{\mu\nu}\|=\|\delta^{\mu\nu}\|$. We can raise or lower indices of components of tensor fields with the aid of the metric tensor. For example, $F^{\mu\nu}=\delta^{\mu\alpha}\delta^{\nu\beta}F_{\alpha\beta}$.

Let us consider the Yang-Mills equations
\begin{eqnarray}
&&\partial_\mu A_\nu-\partial_\nu A_\mu- [A_\mu,A_\nu]=F_{\mu\nu},\label{YM1}\\
&&\partial_\mu F^{\mu\nu}- [A_\mu,F^{\mu\nu}]=J^\nu,\label{YM2}
\end{eqnarray}
where $A_\mu\in \g\T_1$ is the potential, $J^\nu\in \g\T^1$ is the non-Abelian current, $F_{\mu\nu}=-F_{\nu\mu}\in \g\T_{2}$ is the strength of the Yang-Mills field. One suggests that $A_\mu, F_{\mu\nu}$ are unknown and $J^\nu$ is known.

Note that (\ref{YM1}) can be considered as a definition of the strength
\begin{eqnarray}
F_{\mu\nu}:=\partial_\mu A_\nu-\partial_\nu A_\mu- [A_\mu,A_\nu].\label{str}
\end{eqnarray}
We can substitute the components of the skew-symmetric tensor $F^{\mu\nu}$ from (\ref{YM1}) into (\ref{YM2}) and obtain
\begin{equation}
\partial_\mu(\partial^\mu A^\nu-\partial^\nu A^\mu- [A^\mu,A^\nu])- [A_\mu,\partial^\mu A^\nu-\partial^\nu A^\mu-[A^\mu,A^\nu]]=J^\nu.\label{YM:A}
\end{equation}
From this point of view, the system (\ref{YM1}) - (\ref{YM2}) can be considered as a system for the uknown $A_\mu$ and the known $J_\nu$. For each potential $A_\mu$, the corresponding strength (\ref{str}) can be calculated. Note that from a physical point of view, strength is important, not potential. Also note that a special case of the system (\ref{YM1}) - (\ref{YM2}) for the Abelian Lie group $\G=\U(1)$ (Maxwell's equations) can be considered only for the unknown $F_{\mu\nu}$ and the known $J_\nu$, but in this paper, we consider the case of the non-Abelian Lie group $\G=\SU(2)$ and need the potential $A_\mu$ for calculations.

We may verify that the current (\ref{YM2}) satisfies the non-Abelian conservation law
\begin{equation}
\partial_\nu J^\nu-[A_\nu,J^\nu]=0.
\end{equation}

The Yang-Mills equations are gauge invariant. Namely, the transformed tensor fields
\begin{eqnarray}
\acute A_\mu &=& S^{-1}A_\mu S-S^{-1}\partial_\mu S,\nonumber\\
\acute F_{\mu\nu} &=& S^{-1}F_{\mu\nu}S,\qquad S=S(x):\R^n\to\G, \label{gauge:tr}\\
\acute J^{\nu} &=& S^{-1}J^{\nu}S\nonumber
\end{eqnarray}
satisfy the same equations
\begin{eqnarray*}
&&\partial_\mu\acute A_\nu-\partial_\nu
\acute A_\mu-[\acute A_\mu,\acute A_\nu]=\acute F_{\mu\nu},\\
&&\partial_\mu\acute F^{\mu\nu}-[\acute A_\mu,\acute F^{\mu\nu}]=
\acute J^\nu.
\end{eqnarray*}
One says equations (\ref{YM1}) - (\ref{YM2}) are gauge invariant w.r.t. the transformations (\ref{gauge:tr}). The Lie group $\G$ is called the gauge group of the Yang-Mills equations.

From (\ref{YM:A}), we obtain the following algebraic system of equations for constant solutions (that do not depend on $x\in\R^n$)
\begin{eqnarray}
[A_\mu,[A^\mu, A^\nu]]=J^\nu,\qquad \nu=1, \ldots, n,\label{YMconst}
\end{eqnarray}
and the following expression for the strength of the Yang-Mills field
$$F^{\mu\nu}=-[A^\mu, A^\nu].$$

Constant solutions of the Yang-Mills equations with zero current $J^\mu=0$ were considered in \cite{Sch} and \cite{Sch2}. In this paper, we give all solutions of (\ref{YMconst}) for an arbitrary constant non-Abelian current $J^\nu$, $\nu=1, \ldots, n$.

Note that we have already studied constant solutions of the Yang-Mills-Proca equations, which generalize the Yang-Mills equations and the Proca equation, in \cite{YMP} and covariantly constant solutions of the Yang-Mills equations in \cite{YMM}, \cite{PFE}, \cite{YMSh} using the techniques of Clifford algebras. We do not use these results in the current paper.

Our aim is to obtain a general solution $A^\mu\in\su(2)\T^1$ of (\ref{YMconst}) for any $J^\mu\in\su(2)\T^1$. If $n=1$, then (\ref{YMconst}) transforms into $0=J^1$. Therefore, the equation (\ref{YMconst}) has an arbitrary solution $A^1\in \g\T^1$ for $J^1=0$ and it has no solutions for $J^1\neq 0$. Note that two dimensional Yang-Mills theory is discussed in \cite{GN} and other papers. We consider the case $n\geq 2$ further for the sake of completeness.

The Pauli matrices $\sigma^a$, $a=1, 2, 3$
\begin{eqnarray}
\sigma^1=\left( \begin{array}{cc} 0 & 1\\ 1 & 0 \end{array}\right),\quad
\sigma^2=\left( \begin{array}{cc} 0 & -i\\i & 0 \end{array}\right),\quad
\sigma^3=\left( \begin{array}{cc} 1 & 0\\ 0 & -1 \end{array}\right)
\end{eqnarray}
satisfy
$$(\sigma^a)^\dagger=\sigma^a,\qquad \tr\sigma^a=0,\qquad \{\sigma^a, \sigma^b\}=2\delta^{ab}{\bf 1},\qquad [\sigma^a, \sigma^b]=2i \epsilon^{ab}_{\,\,\,\,\,c}\sigma^c,$$
where $\epsilon^{ab}_{\,\,\,\,\,c}=\epsilon^{abc}$ is the antisymmetric Levi-Civita symbol, $\epsilon^{123}=1$.

We can take the following basis of the Lie algebra $\su(2)$:
\begin{eqnarray}
\tau^1=\frac{\sigma^1}{2i},
\qquad \tau^2=\frac{\sigma^2}{2i},
\qquad \tau^3=\frac{\sigma^3}{2i}
\end{eqnarray}
\begin{eqnarray}
\mbox{with}\qquad (\tau^a)^\dagger=-\tau^a,\qquad \tr\,\tau^a=0,\qquad [\tau^a, \tau^b]=\epsilon^{ab}_{\,\,\,\,\,c}\tau^c,
\end{eqnarray}
i.e. the structural constants of the Lie algebra $\su(2)$ are the Levi-Civita symbol.

For the potential and the current, we have
\begin{eqnarray}
A^\mu=A^\mu_{\,\,a} \tau^a,\qquad J^\mu=J^\mu_{\,\,a} \tau^a,\qquad A^\mu_{\,\, a}, J^\mu_{\,\, a}\in\R.\label{basisras}
\end{eqnarray}
Latin indices take values $a=1, 2, 3$ and Greek indices take values $\mu=1, 2, \ldots, n$.

Let us substitute (\ref{basisras}) into (\ref{YMconst}). We have
\begin{eqnarray}
&&A_{\mu c} A^\mu_{\,\,a} A^\nu_{\,\,b}[\tau^c,[\tau^a, \tau^b]]=J^\nu_{\,\,a} \tau^a,\nonumber\\
&&A_{\mu c} A^\mu_{\,\,a} A^\nu_{\,\,b} \epsilon^{ab}_{\,\,\,\,\,d}[\tau^c,\tau^d]=J^\nu_{\,\,a} \tau^a,\nonumber\\
&&A_{\mu c} A^\mu_{\,\,a} A^\nu_{\,\,b}\epsilon^{ab}_{\,\,\,\,\,d}\epsilon^{cd}_{\,\,\,\,\,k}\tau^k=J^\nu_{\,\,a} \tau^a,\nonumber
\end{eqnarray}
and, finally,
\begin{eqnarray}
A_{\mu c} A^\mu_{\,\,a} A^\nu_{\,\,b}\epsilon^{ab}_{\,\,\,\,\,d}\epsilon^{cd}_{\,\,\,\,\,k}=J^\nu_{\,\,k},\qquad \nu=1, \ldots, n,\qquad k=1, 2, 3.\label{eq}
\end{eqnarray}
We obtain $3n$ equations ($k=1, 2, 3$, $\nu=1, 2, \ldots, n$) for $3n$ expressions $A^\nu_{\,\,k}$ and $3n$ expressions $J^\nu_{\,\, k}$. We can consider this system of equations as the system of equations for two matrices $A_{n \times 3}=||A^\nu_{\,\, k}||$ and $J_{n \times 3}=||J^\nu_{\,\, k}||$.

We will give the general solution $A^\nu_{\,\,k}$ of the system (\ref{eq}) for all $J^\nu_{\,\, k}$ using algebraic methods. In Section 3, we also calculate the strength $F_{\mu\nu}$  (using (\ref{str})) and the invariant $F^2=F_{\mu\nu}F^{\mu\nu}$ for each solution $A_\mu$, because they are important from a physical point of view.

We have the following well-known theorem on the singular value decomposition (SVD), see \cite{SVD1}, \cite{SVD2}.
\begin{theorem}\label{thSVD}
For an arbitrary real matrix $A_{n\times N}$ of the size $n\times N$, there exist orthogonal matrices $L_{n\times n}\in\OO(n)$ and $R_{N\times N}\in\OO(N)$ such that
\begin{eqnarray}
L_{n\times n}^\T A_{n\times N}R_{N \times N}= D_{n\times N},\label{hs1}
\end{eqnarray}
where
$$
D_{n\times N}=\diag(\mu_1, \ldots, \mu_s),\qquad s=\min(n,N),\qquad \mu_1\geq \mu_2 \geq \cdots \geq \mu_s \geq 0.
$$
\end{theorem}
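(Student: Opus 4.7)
The plan is to deduce the singular value decomposition from the spectral theorem for real symmetric matrices, which I take as known. The key observation is that $A^\T A\in\Mat(N,\R)$ is symmetric and positive semi-definite, so its eigenvalues can be ordered $\lambda_1\geq\lambda_2\geq\cdots\geq\lambda_N\geq 0$ and there exists $R\in\OO(N)$ with $R^\T A^\T A R=\diag(\lambda_1,\ldots,\lambda_N)$. Setting $\mu_i:=\sqrt{\lambda_i}$ produces the prospective singular values, automatically satisfying the required monotonicity.

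I would then construct $L$ explicitly. Write $R=[r_1\mid\cdots\mid r_N]$ in terms of its columns and let $r=\rank A=\rank(A^\T A)$, so that $\mu_1,\ldots,\mu_r>0$ and $\mu_{r+1}=\cdots=\mu_N=0$. Define $l_i:=\mu_i^{-1}A r_i\in\R^n$ for $i=1,\ldots,r$. A short calculation using $A^\T A r_j=\lambda_j r_j$ gives $l_i^\T l_j=\mu_i^{-1}\mu_j^{-1}\lambda_j\delta_{ij}=\delta_{ij}$, so $\{l_1,\ldots,l_r\}$ is orthonormal in $\R^n$. Since $r\leq\min(n,N)\leq n$, I can extend this set by Gram--Schmidt on any completion to an orthonormal basis $\{l_1,\ldots,l_n\}$ of $\R^n$ and assemble $L:=[l_1\mid\cdots\mid l_n]\in\OO(n)$.

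The final step is to read off the entries of the $n\times N$ matrix $L^\T A R$, whose $(i,j)$ entry is $l_i^\T A r_j$. For $j\leq r$ the construction gives $A r_j=\mu_j l_j$, hence $l_i^\T A r_j=\mu_j\delta_{ij}$. For $j>r$ the identity $\|A r_j\|^2=r_j^\T A^\T A r_j=\lambda_j=0$ forces $A r_j=0$, so the $j$-th column of $L^\T A R$ vanishes. The resulting matrix therefore has $\mu_j$ in position $(j,j)$ for $j=1,\ldots,s=\min(n,N)$ and zeros elsewhere, which is precisely $D_{n\times N}$ as defined in the statement.

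The main conceptual obstacle is really just the spectral theorem on which the whole argument rests; once it is granted, the rest is a direct construction and a bookkeeping check of indices. The only mild subtlety is the rectangular shape of $D_{n\times N}$: one must be slightly careful when $n<N$ that the ``extra'' zero eigenvalues of $A^\T A$ beyond index $n$ simply give rise to vanishing columns of $L^\T A R$ past column $n$, which is consistent with the rectangular diagonal form and requires no further adjustment.
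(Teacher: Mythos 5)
Your proof is correct. Note that the paper does not actually prove Theorem \ref{thSVD}: it is quoted as a well-known result with references to the literature, and the only related computation in the text is the observation (derived \emph{from} the decomposition, after the theorem statement) that the columns of $L$ and $R$ are eigenvectors of $AA^\T$ and $A^\T A$ respectively. Your argument is the standard spectral-theorem construction and is entirely consistent with that observation; the one point worth keeping explicit, which you do address, is the rectangular case $n<N$, where the zero eigenvalues of $A^\T A$ with index beyond $s=\min(n,N)$ contribute only vanishing columns of $L^\T A R$ and so do not disturb the diagonal form $D_{n\times N}$.
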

The numbers $\mu_1, \ldots, \mu_s$ are called the singular values, the columns $l_i$ of the matrix $L$ are called the left singular vectors, the columns $r_i$ of the matrix $R$ are called the right singular vectors. From (\ref{hs1}), we get $AR=LD$ and $A^\T L=RD^\T$. We obtain the following relation:
$$A A^\T L=LDR^\T R D^\T=L D D^\T,\qquad A^\T A R=R D^\T L^\T L D=R D^\T D,$$
i.e. the columns of the matrix $L$ are eigenvectors of the matrix $A A^\T$, and the columns of the matrix $R$ are eigenvectors of the matrix $A^\T A$. The squares of the singular values are the eigenvalues of the corresponding matrices. From this fact, it follows that singular values are uniquely determined.

\begin{lemma}\label{lemma1} The system of equations (\ref{eq}) is invariant under the transformation
$$A \to \acute{A}=AP,\qquad J \to \acute{J}=J P,\qquad P\in\SO(3)$$
and under the transformation
$$A \to \hat{A}=QA,\qquad J \to \hat{J}=Q J,\qquad Q\in\OO(n).$$
\end{lemma}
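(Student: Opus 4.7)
My plan is to recast the left-hand side of (\ref{eq}) as an explicit matrix-algebraic expression in the $n\times 3$ matrix $A=\|A^\nu_{\,\,a}\|$, after which both invariances reduce to essentially one-line computations.

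\emph{Step 1 (reduction to matrix form).} I would first apply the standard three-dimensional identity for the product of two Levi-Civita symbols,
$$\epsilon^{ab}_{\,\,\,\,\,d}\,\epsilon^{cd}_{\,\,\,\,\,k}=\delta^{a}_{\,\,k}\delta^{bc}-\delta^{ac}\delta^{b}_{\,\,k},$$
and carry out the two resulting color contractions. What remains is
$$T^\nu_{\,\,k}(A):=A_{\mu c}A^\mu_{\,\,a}A^\nu_{\,\,b}\epsilon^{ab}_{\,\,\,\,\,d}\epsilon^{cd}_{\,\,\,\,\,k}=(AA^\T A)^\nu_{\,\,k}-\tr(A^\T A)\,A^\nu_{\,\,k},$$
i.e.\ in matrix form $T(A)=AA^\T A-\tr(A^\T A)\,A$, and equation (\ref{eq}) becomes simply $T(A)=J$.

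\emph{Step 2 (verification of equivariance).} With the LHS in this form, both claims are immediate. For $A\to AP$ with $P\in\SO(3)$, using $PP^\T=\mathbf{1}$ and cyclicity of the trace,
$$T(AP)=A(PP^\T)A^\T AP-\tr(P^\T A^\T AP)\,AP=AA^\T A\,P-\tr(A^\T A)\,AP=T(A)\,P.$$
For $A\to QA$ with $Q\in\OO(n)$, an identical computation using $Q^\T Q=\mathbf{1}$ yields $T(QA)=Q\,T(A)$. Since the right-hand side of (\ref{eq}) transforms in the same way under the two prescriptions ($J\to JP$ and $J\to QJ$), both transformations send solutions to solutions.

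\emph{Step 3 (obstacle and remark).} The only nontrivial step is the tensor manipulation in Step 1; the two epsilons must be put into a form where the standard identity $\sum_d\epsilon_{abd}\epsilon_{cmd}=\delta_{ac}\delta_{bm}-\delta_{am}\delta_{bc}$ applies, which is achieved by a cyclic permutation of $\epsilon^{cd}_{\,\,\,\,\,k}$. I note that Step~2 uses only $PP^\T=\mathbf{1}$, so the first invariance in fact extends from $\SO(3)$ to $\OO(3)$; the restriction to $\SO(3)$ in the statement is natural from the gauge-theoretic point of view, since this is the subgroup arising from the adjoint action of the gauge group $\SU(2)$ on $\su(2)$ via the double cover.
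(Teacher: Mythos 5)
Your proof is correct, but it follows a genuinely different route from the paper's. You contract the two Levi-Civita symbols via the standard identity and reduce the left-hand side of (\ref{eq}) to the closed matrix form $T(A)=AA^\T A-\tr(A^\T A)\,A$ (which one can sanity-check on diagonal $A$: it reproduces the system (\ref{s2}) exactly), after which both equivariances $T(AP)=T(A)P$ and $T(QA)=QT(A)$ are one-line consequences of $PP^\T={\bf 1}$, $Q^\T Q={\bf 1}$ and cyclicity of the trace. The paper instead derives the two invariances conceptually: the $\SO(3)$ action on the color index is the shadow of the (constant) gauge invariance of the Yang--Mills equations under $\SU(2)$, pulled down through the two-sheeted covering $\Spin(3)\simeq\SU(2)\to\SO(3)$ via $S^{-1}\tau^aS=p^a_b\tau^b$, while the $\OO(n)$ action comes from invariance under orthogonal changes of coordinates. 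Your computation is more elementary and self-contained --- it needs no surjectivity of the adjoint map and no appeal to the covariance of the original field equations --- and, as you note, it shows the color invariance actually holds for all of $\OO(3)$, not just $\SO(3)$; the paper's restriction to $\SO(3)$ is the price of interpreting the symmetry as a genuine gauge transformation (only $\SO(3)$ is realized by the adjoint action of $\SU(2)$), which matters later when Remarks 2--3 transport solutions back to the original gauge. What the paper's approach buys in exchange is that the invariance is explained rather than merely verified, and the same argument would apply verbatim to any compact gauge group without recomputing structure-constant identities.
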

\begin{proof} The system (\ref{YMconst}) is invariant under the transformation
\begin{eqnarray}
\acute A_\mu = S^{-1}A_\mu S,\qquad \acute J^{\nu} = S^{-1}J^{\nu}S,\qquad S\in\G=\SU(2).\label{gauge:tr2}
\end{eqnarray}
It follows from the invariance under (\ref{gauge:tr}) and the fact that an element $S\in \G=\SU(2)$ does not depend on $x$ now.

Let us use the theorem on the two-sheeted covering of the orthogonal group $\SO(3)$ by the spin group $\Spin(3)\simeq\SU(2)$. For an arbitrary matrix $P=||p^a_b||\in\SO(3)$, there exist two matrices $\pm S\in\SU(2)$ such that
$$S^{-1}\tau^a S=p^a_b \tau^b.$$

We conclude that the system (\ref{eq}) is invariant under the transformation
\begin{eqnarray}
&&\acute{A^\mu}=S^{-1}A^\mu_{\,\,a} \tau^a S=A^\mu_{\,\,a} S^{-1}\tau^a S=A^\mu_{\,\,a} p^a_b \tau^b=\acute{A^\mu_{\,\,b}} \tau^b,\qquad \acute{A^\mu_{\,\,b}}=A^\mu_{\,\,a} p^a_b,\nonumber\\
&&\acute{J^\mu}=S^{-1}J^\mu_{\,\,a} \tau^a S=J^\mu_{\,\,a} S^{-1}\tau^a S=J^\mu_{\,\,a} p^a_b \tau^b=\acute{J^\mu_{\,\,b}} \tau^b,\qquad \acute{J^\mu_{\,\,b}}=J^\mu_{\,\,a} p^a_b.\nonumber
\end{eqnarray}

The Yang-Mills equations are invariant under the orthogonal transformation of coordinates. Namely, let us consider the transformation $x^\mu \to \hat{x^\mu}=q^\mu_\nu x^\nu$, where $Q=||q^\mu_\nu||\in\OO(n).$  The system (\ref{eq}) is invariant under the transformation
\begin{eqnarray}
&&\hat{A^\nu}=q^\nu_\mu A^\mu=q^\nu_\mu A^\mu_{\,\,a} \tau^a=\hat{A^\nu_{\,\,a}} \tau^a,\qquad \hat{A^\nu_{\,\,a}}=q^\nu_\mu A^\mu_{\,\,a},\nonumber\\
&&\hat{J^\nu}=q^\nu_\mu J^\mu=q^\nu_\mu J^\mu_{\,\,a} \tau^a=\hat{J^\nu_{\,\,a}} \tau^a,\qquad \hat{J^\nu_{\,\,a}}=q^\nu_\mu J^\mu_{\,\,a}.\nonumber
\end{eqnarray}
The lemma is proved.
\end{proof}

Combining gauge and orthogonal transformations, we conclude that the system (\ref{eq}) is invariant under the transformation
\begin{eqnarray}
&&A^\nu_b \to  \acute{\hat{A^\nu_{\,\,b}}}=q^\nu_\mu A^\mu_{\,\,a} p^a_b,\quad A_{n\times 3}\to \acute{\hat{A}}_{n \times 3}=Q_{n\times n}A_{n\times 3}P_{3\times3},\label{tr2}\\
&&J^\nu_b \to  \acute{\hat{J^\nu_{\,\,b}}}=q^\nu_\mu J^\mu_{\,\,a} p^a_b,\quad J_{n\times 3}\to \acute{\hat{J}}_{n \times 3}=Q_{n\times n}J_{n\times 3}P_{3\times3}\nonumber
\end{eqnarray}
for any $P\in\SO(3)$ and $Q\in\OO(n)$.

\begin{theorem}\label{th1} Let $A=||A^\nu_{\,\,k}||$, $J=||J^\nu_{\,\, k}||$ satisfy the system of $3n$ cubic equations (\ref{eq}). Then there exist matrices $P\in\SO(3)$ and $Q\in\OO(n)$ such that $QAP$ is diagonal. For all such matrices $P$ and $Q$, the matrix $QJP$ is diagonal too and the system (\ref{eq}) takes the following form under the transformation (\ref{tr2}):
\begin{eqnarray}
- a_1((a_2)^2+(a_3)^2)=j_1,\nonumber\\
- a_2((a_1)^2+(a_3)^2)=j_2,\label{s2}\\
- a_3((a_1)^2+(a_2)^2)=j_3\nonumber
\end{eqnarray}
in the case $n\geq 3$ and
\begin{eqnarray}
- a_1(a_2)^2=j_1,\label{s22}\\
- a_2(a_1)^2=j_2\nonumber
\end{eqnarray}
in the case $n=2$.

We denote diagonal elements of the matrix $QAP$ by $a_1$, $a_2$, $a_3$ (or $a_1$, $a_2$) and diagonal elements of the matrix $QJP$ by $j_1$, $j_2$, $j_3$ (or $j_1$, $j_2$).
\end{theorem}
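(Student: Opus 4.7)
The plan is to apply the SVD theorem to reduce $A$ to diagonal form via a transformation of the type covered by Lemma \ref{lemma1}, and then verify by a direct index computation that this diagonal form of $A$ forces $J$ to share the same diagonal structure.

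First I would apply Theorem \ref{thSVD} to the $n \times 3$ matrix $A = \|A^\nu_{\,\,k}\|$, obtaining $L \in \OO(n)$ and $R \in \OO(3)$ with $L^\T A R$ diagonal. Because Lemma \ref{lemma1} requires $P \in \SO(3)$ rather than merely $\OO(3)$, if $\det R = -1$ I would replace $R$ by $R \cdot \diag(1,1,-1) \in \SO(3)$, which only flips the sign of one diagonal entry of $L^\T A R$ (the entries need no longer be non-negative, but this is harmless in the sequel). Setting $Q := L^\T \in \OO(n)$ and $P :=$ (adjusted $R$) $\in \SO(3)$, the matrix $QAP$ is diagonal, which gives the existence claim.

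By Lemma \ref{lemma1}, for any $P \in \SO(3)$ and $Q \in \OO(n)$ such that $QAP$ is diagonal, the transformed pair $(QAP, QJP)$ again satisfies the system (\ref{eq}). It therefore suffices to compute the left-hand side of (\ref{eq}) under the assumption that $A^\nu_{\,\,b} = a_b \delta^\nu_b$, with $a_b$ defined for $b \leq \min(n,3)$ and vanishing otherwise. The contraction $A_{\mu c} A^\mu_{\,\,a}$ collapses to $a_a^2 \delta_{ca}$, so for $\nu > 3$ the factor $A^\nu_{\,\,b}$ is identically zero and therefore $J^\nu_{\,\,k} = 0$ for all $k$; while for $\nu \in \{1,2,3\}$ the pinning $b = \nu$ reduces the left-hand side to $a_a^2 a_\nu \epsilon^{a\nu}_{\,\,\,\,\,d}\epsilon^{ad}_{\,\,\,\,\,k}$. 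Enumerating the two nonzero $(a,d)$ pairs in $\epsilon^{a\nu d}$ (equivalently using the identity $\epsilon_{ijm}\epsilon_{ilp} = \delta_{jl}\delta_{mp} - \delta_{jp}\delta_{ml}$) shows this equals $-a_\nu (\sum_{l\neq \nu} a_l^2)\delta^\nu_k$. Equating with $J^\nu_{\,\,k}$ proves $QJP$ diagonal and yields precisely system (\ref{s2}); the same calculation with $a_3$ absent gives (\ref{s22}) for $n = 2$.

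The main obstacle is simply the double Levi-Civita contraction combined with keeping the cases $\nu \leq 3$ versus $\nu > 3$ (and $n = 2$ versus $n \geq 3$) separate in the bookkeeping; the sign adjustment needed to land in $\SO(3)$ is a minor but essential step, and the appeal to SVD is immediate.
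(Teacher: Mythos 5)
Your proposal is correct and follows essentially the same route as the paper: singular value decomposition of $A$ with a sign adjustment to place the right factor in $\SO(3)$, followed by the double Levi-Civita contraction showing that a diagonal $A$ forces the left-hand side of (\ref{eq}) to vanish unless $\nu=k\leq 3$, hence forces $QJP$ to be diagonal and reduces the system to (\ref{s2}) or (\ref{s22}). The only cosmetic difference is the determinant fix (you flip one column of $R$, losing non-negativity of one singular value, whereas the paper flips the first columns of both $L$ and $R$ to keep the diagonal entries non-negative); both are valid.
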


\begin{proof} Let the system (\ref{eq}) has some solution $A^\mu_{\,\,a}$, $J^\mu_{\,\,a}$. Let us synchronize gauge transformation and orthogonal transformation such that $A=||A^\mu_{\,\, a}||$ will have a diagonal form. Namely, we take $P\in\SO(3)$ and $Q\in\OO(n)$ such that $QAP$ is diagonal. Note that we can always find the matrix $R\in\SO(N)$ from the special orthogonal group in SVD (\ref{hs1}). If it has the determinant $-1$, then we can change the sign of the first columns of the matrices $L$ and $R$ and the determinant will be $+1$.

Let us consider the case $n\geq 3$. In (\ref{eq}), we must take $\mu=a=c$, $\nu=b$ to obtain nonzero summands. Also we need $b=k$, i.e. $\nu=k$. In this case, the product of two Levi-Civita symbols in (\ref{eq}) equals $-1$. If $\nu\neq k$, then the expression on the left side of the equation equals zero. If $\nu=k$, then we obtain the following sum over index $\mu=a=c$
$$-\acute{\hat{A^k_{\,\,k}}}\sum_{a\neq k} (\acute{\hat{A^a_{\,\,a}}})^2,$$
where we have only 2 summands in the sum (except the value $\mu=k$ because of the Levi-Civita symbols).

Under our transformation the expressions $J^\mu_{\,\,a}$ are transformed into some new expressions $\acute{\hat{J^\mu_{\,\,a}}}$.
We obtain the following system of $3n$ equations
\begin{eqnarray}
- \acute{\hat{A^1_{\,\,1}}}((\acute{\hat{A^2_{\,\, 2}}})^2+(\acute{\hat{A^3_{\,\,3}}})^2)=\acute{\hat{J^1_{\,\,1}}},\nonumber\\
- \acute{\hat{A^2_{\,\,2}}}((\acute{\hat{A^1_{\,\,1}}})^2+(\acute{\hat{A^3_{\,\,3}}})^2)=\acute{\hat{J^2_{\,\,2}}},\label{Aeq}\\
- \acute{\hat{A^3_{\,\,3}}}((\acute{\hat{A^1_{\,\,1}}})^2+(\acute{\hat{A^2_{\,\,2}}})^2)=\acute{\hat{J^3_{\,\,3}}},\nonumber\\
0=\acute{\hat{J^\nu_{\,\,k}}},\qquad \nu\neq k,\qquad \nu=1, \ldots, n,\qquad k=1, 2, 3.\nonumber
\end{eqnarray}
This system of equations has solutions if the matrix $\acute{\hat{J}}$ is also diagonal.

In the case $n=2$, we obtain the system
\begin{eqnarray}
- \acute{\hat{A^1_{\,\,1}}}(\acute{\hat{A^2_{\,\, 2}}})^2=\acute{\hat{J^1_{\,\,1}}},\nonumber\\
- \acute{\hat{A^2_{\,\,2}}}(\acute{\hat{A^1_{\,\,1}}})^2=\acute{\hat{J^2_{\,\,2}}},\label{Aeq2}\\
0=\acute{\hat{J^\nu_{\,\,k}}},\qquad \nu\neq k,\qquad \nu=1, 2,\qquad k=1, 2, 3\nonumber
\end{eqnarray}
instead of the system (\ref{Aeq}) and the proof is similar.
\end{proof}

\textbf{Remark 1.} In Theorem \ref{th1}, we calculate SVD of the matrix $A$ and obtain non-negative singular values $a_1$, $a_2$, $a_3$ (or $a_1$, $a_2$ in the case $n=2$). The diagonal elements of the matrix $QJP$ will be non-positive because of the equations (\ref{s2}) (or (\ref{s22})). If we want, we can change the matrix $Q\in\OO(n)$ (multiplying by the matrix ${\bf -1}\in\OO(n)$) such that the elements of the new matrix $QJP$ will be non-negative (they will be singular values of the matrix $J$) and the elements of the new matrix $QAP$ will be non-positive. Multiplying the matrices $P$ and $Q$ by permutation matrices, which are also orthogonal, we can obtain the diagonal elements of the new matrix $QJP$ in decreasing order, the diagonal elements of the new matrix $QAP$ will be in some other order.

\textbf{Remark 2.} Suppose we have known matrix $J$ and want to obtain all solutions $A$ of the system (\ref{eq}). We can always calculate singular values $j_1$, $j_2$, $j_3$ (or $j_1$, $j_2$) of the matrix $J$ and solve the system (\ref{s2}) (or (\ref{s22})) using lemmas below. Finally, we obtain all solutions $A_D=\diag(a_1, a_2, a_3)$ (or $A_D=\diag(a_1, a_2)$) of the system (\ref{eq}) but in some other system of coordinates depending on $Q\in\OO(n)$ and with gauge fixing depending on $P\in\SO(3)$. The matrix
$$A=Q^{-1} A_D P^{-1}$$
will be solution of the system (\ref{eq}) in the original system of coordinates and with the original gauge fixing.

\textbf{Remark 3.} Note that $Q^{-1} Q_1^{-1} A_D P_1^{-1} P^{-1}$, for all $Q_1\in\OO(n)$ and $P_1\in\SO(3)$ such that $Q_1 J_D P_1=J_D$, will be also solutions of the system (\ref{eq}) in the original system of coordinates and with the original gauge fixing because of Lemma \ref{lemma1}. Here we denote $J_D=\diag(j_1, j_2, j_3)$ (or $J_D=\diag(j_1, j_2)$).

Let us give one example. If the matrix $J=0$, then all singular values of this matrix equal zero and we can take $Q=P={\bf 1}$ for its SVD. We solve the system (\ref{s2}) (or (\ref{s22})) for $j_1=j_2=j_3=0$ (or $j_1=j_2=0$) and obtain all solutions $A_D=\diag(a_1, a_2, a_3)$ (or $A_D=\diag(a_1, a_2)$) of this system. We have $Q_1 J_D P_1=J_D$ for $J_D=0$ and any $Q_1\in\OO(n)$, $P_1\in\SO(3)$. Therefore, the matrices $Q_1 A_D P_1$ for all $Q_1\in\OO(n)$ and $P_1\in\SO(3)$ will be solutions of the system (\ref{eq}) because of Lemma \ref{lemma1}.

Let us present a general solution of the systems (\ref{s2}) and (\ref{s22}) and discuss symmetries of these systems.


The systems (\ref{s2}), (\ref{s22}) can be rewritten in the following form using $b_k:=-a_k$, $k=1, 2, 3$ or $k=1, 2$:
\begin{eqnarray}
&&n\geq 3:\qquad b_1(b_2^2+b_3^2)=j_1,\quad b_2(b_1^2+b_3^2)=j_2,\quad b_3(b_1^2+b_2^2)=j_3,\label{eqn33}\\
&&n=2:\qquad b_1 b_2^2=j_1,\qquad b_2 b_1^2=j_2.\label{eqn233}
\end{eqnarray}
In the following lemmas, we assume that $j_1$, $j_2$, $j_3$ are known (parameters), and $b_1$, $b_2$, $b_3$ are unknown. We give  general solutions of the corresponding systems of equations.

The system (\ref{eqn33}) has the following symmetry. Suppose that $(b_1, b_2, b_3)$ is a solution of (\ref{eqn33}) for known $(j_1, j_2, j_3)$. If we change the sign of some $j_k$, $k=1, 2, 3$, then we must change the sign of the corresponding $b_k$, $k=1, 2, 3$. Thus, without loss of generality, we can assume that all expressions $b_k$, $j_k$, $k=1, 2, 3$ in (\ref{eqn33}) are non-negative. 
Similarly for the system (\ref{eqn233}).

\begin{lemma} \label{thn2} The system of equations (\ref{eqn233}) has the following general solution:
\begin{enumerate}
  \item in the case $j_1=j_2=0$, has solutions
$(b_1, 0)$, $(0, b_2)$ for all $b_1, b_2\in \R$;
  \item in the cases $j_1=0$, $j_2\neq 0$; $j_1\neq 0$, $j_2=0$, has no solutions;
  \item in the case $j_1\neq 0$, $j_2\neq 0$,
   has a unique solution
  $$b_1=\sqrt[3]{\frac{j_2^2}{j_1}},\qquad b_2=\sqrt[3]{\frac{j_1^2}{j_2}}.$$
\end{enumerate}
\end{lemma}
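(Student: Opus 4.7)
The plan is to split the argument into the three announced cases and argue directly on each; the system (\ref{eqn233}) is elementary enough that no structural machinery beyond case analysis is needed. When $j_1=j_2=0$, I would start from $b_1 b_2^2=0$, which (working over $\R$) forces $b_1=0$ or $b_2=0$; in either subcase the second equation $b_2 b_1^2=0$ holds automatically, so the solution set is exactly $\{(b_1,0):b_1\in\R\}\cup\{(0,b_2):b_2\in\R\}$, matching the claim.

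For the mixed case where exactly one of $j_1,j_2$ is zero, by symmetry it suffices to treat $j_1=0$, $j_2\neq 0$. The equation $b_1 b_2^2=0$ again forces $b_1=0$ or $b_2=0$, but either possibility makes $b_2 b_1^2=0$, contradicting $j_2\neq 0$. Hence no solution exists.

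Finally, in the case $j_1,j_2\neq 0$, both $b_1,b_2$ must be nonzero, otherwise a left-hand side would vanish. I would then eliminate one variable by dividing the two equations, obtaining $b_2/b_1 = j_1/j_2$. Substituting $b_2=(j_1/j_2)\,b_1$ into $b_1 b_2^2 = j_1$ yields $b_1^3 = j_2^2/j_1$, so taking the unique real cube root gives $b_1 = \sqrt[3]{j_2^2/j_1}$, and the analogous formula for $b_2$ follows by symmetry. A direct check $b_1 b_2^2 = \sqrt[3]{(j_2^2/j_1)(j_1^4/j_2^2)} = \sqrt[3]{j_1^3} = j_1$ (and similarly for the other equation) confirms this is indeed a solution, and the derivation above shows that it is the only one. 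No real obstacle arises; the only subtlety is that one should use the uniquely defined real cube root, which cleanly handles arbitrary signs of $j_1,j_2$ without needing to invoke the non-negativity reduction stated before the lemma.
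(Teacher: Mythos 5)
Your proof is correct and is exactly the direct calculation that the paper's proof merely asserts (``The proof is by direct calculation''), so the approach is essentially the same. The case analysis, the elimination via the ratio $b_2/b_1=j_1/j_2$, and the use of the real cube root to handle signs are all sound and complete.
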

\begin{proof}
The proof is by direct calculation.
\end{proof}

The system (\ref{eqn33}) has the following symmetry.

\begin{lemma}\label{thSym}
If the system (\ref{eqn33}) has a solution $(b_1, b_2, b_3)$, where $b_1\neq 0$, $b_2\neq 0$, $b_3\neq 0$, then this system has also a solution $(\frac{K}{b_1},\frac{K}{b_2},\frac{K}{b_3})$, where $K=(b_1 b_2 b_3)^{\frac{2}{3}}.$
\end{lemma}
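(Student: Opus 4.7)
The plan is direct substitution. Set $c_k := K/b_k$ for $k=1,2,3$ with $K=(b_1 b_2 b_3)^{2/3}$; this is well-defined since by hypothesis $b_1 b_2 b_3 \neq 0$, so $(b_1 b_2 b_3)^2 > 0$ has a real cube root, and consequently $K \neq 0$ and each $c_k$ is a well-defined nonzero real. I then verify that the triple $(c_1, c_2, c_3)$ satisfies the three equations of (\ref{eqn33}) with the same right-hand sides $(j_1, j_2, j_3)$.

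For the first equation I compute
\[
c_1(c_2^2 + c_3^2) \;=\; \frac{K}{b_1}\left(\frac{K^2}{b_2^2} + \frac{K^2}{b_3^2}\right) \;=\; \frac{K^3 (b_2^2 + b_3^2)}{b_1\, b_2^2\, b_3^2}.
\]
Using $K^3 = (b_1 b_2 b_3)^2 = b_1^2 b_2^2 b_3^2$, this collapses to $b_1(b_2^2 + b_3^2)$, which equals $j_1$ by the hypothesis that $(b_1, b_2, b_3)$ solves (\ref{eqn33}). The other two equations follow identically: both the system (\ref{eqn33}) and the definition of $K$ are invariant under cyclic permutation of the indices $\{1,2,3\}$, so the same three lines of algebra yield $c_2(c_1^2 + c_3^2) = j_2$ and $c_3(c_1^2 + c_2^2) = j_3$.

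There is no real obstacle in the verification; the only point that requires any thought is the choice of exponent $2/3$ in the definition of $K$. One can motivate it before starting the computation: the left-hand sides of (\ref{eqn33}) are homogeneous of degree $3$, so an involution of the form $b_k \mapsto c/b_k$ with a single scalar $c$ (independent of $k$) will preserve each equation only if the factor $c^3/(b_1 b_2^2 b_3^2)$ arising in the computation above equals $b_1$, and similarly for the other two equations. This forces $c^3 = (b_1 b_2 b_3)^2$, uniquely identifying $c = (b_1 b_2 b_3)^{2/3} = K$. With $K$ so chosen, the verification reduces to the one-line calculation displayed above.
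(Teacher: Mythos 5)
Your proof is correct and follows essentially the same route as the paper: direct substitution of $(K/b_1, K/b_2, K/b_3)$ into the first equation, reduction via $K^3=(b_1b_2b_3)^2$, and an appeal to the cyclic symmetry for the remaining two equations. The only difference is one of direction — the paper derives the value of $K$ from the requirement that the substituted triple solve the equation, while you fix $K$ in advance and verify — which is an equivalent reading of the same one-line computation.
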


\begin{proof} Let us substitute $(\frac{K}{b_1},\frac{K}{b_2},\frac{K}{b_3})$ into the first equation. We have
$$j_1=4\frac{K}{b_1}(\frac{K^2}{b_2^2}+\frac{K^2}{b_3^2})=\frac{K^3(b_2^2+b_3^2)}{b_1b_2^2b_3^2}.$$
Using $j_1=b_1(b_2^2+b_3^2)$, we obtain
$$K=(b_1 b_2 b_3)^{\frac{2}{3}}.$$
We can verify that the same will be for the other two equations.
\end{proof}

For example, let us take $j_1=13$, $j_ 2=20$, $j_3=15$. Then the system (\ref{eqn33}) has solutions $(b_1, b_2, b_3)=(1, 2, 3)$ and $(6^{\frac{2}{3}},\frac{6^{\frac{2}{3}}}{2},\frac{6^{\frac{2}{3}}}{3})$.

\begin{lemma}\label{thGenSol} The system of equations (\ref{eqn33}) has the following general solution:
\begin{enumerate}
  \item in the case $j_1=j_2=j_3=0$, has solutions
$(b_1, 0, 0)$, $(0, b_2, 0)$, and $(0, 0, b_3)$ for all $b_1, b_2, b_3\in \R$;
  \item in the cases $j_1=j_2=0$, $j_ 3\neq 0$ (or similar cases with circular permutation), has no solutions;
  \item in the case $j_ 1\neq0$, $j_ 2\neq 0$, $j_ 3=0$ (or similar cases with circular permutation),
   has a unique solution
  $$b_1=\sqrt[3]{\frac{j_2^2}{j_1}},\qquad b_ 2=\sqrt[3]{\frac{j_1^2}{j_ 2}},\qquad b_ 3=0;$$

 \item in the case $j_1=j_2=j_3\neq 0$, has a unique solution
 $$b_ 1=b_ 2=b_ 3=\sqrt[3]{\frac{j_1}{2}};$$

 \item in the case of not all the same $j_1, j_2, j_3>0$ (and we take positive for simplicity), has the following two solutions
     $$(b_ {1+}, b_ {2+}, b_ {3+}),\qquad (b_ {1-}, b_ {2-}, b_ {3-})$$
     with the following expression for $K$ from Lemma \ref{thSym}
$$K:=b_{1+}b_{1-}=b_{2+}b_{2-}=b_{3+}b_{3-}=(b_{1+}b_{2+}b_{3+})^{\frac{2}{3}}=(b_{1-}b_{2-}b_{3-})^{\frac{2}{3}}:$$
       \begin{enumerate}
        \item in the case $j_1=j_2>j_3>0$ (or similar cases with circular permutation):
\begin{eqnarray}
&&b_{1\pm}=b_{2\pm}=\sqrt[3]{\frac{j_3}{2z_{\pm}}},\quad b_{3\pm}=z_{\pm} b_{1\pm},\quad z_{\pm}=\frac{j_1\pm\sqrt{j_1^2-j_3^2}}{j_3}.\nonumber\\
&&\mbox{Moreover,}\quad z_+ z_-=1,\quad K=(\frac{j_3}{2})^\frac{2}{3}.\nonumber
\end{eqnarray}
        \item in the case $j_3>j_1=j_2>0$ (or similar cases with circular permutation):
\begin{eqnarray}
&&b_{1\pm}=\frac{1}{w_\pm}b_{3},\quad b_{2\pm}=w_{\pm}b_{3},\quad b_{3\pm}=b_{3}=\sqrt[3]{\frac{j_1}{s}},\nonumber\\
&&w_{\pm}=\frac{s\pm\sqrt{s^2-4}}{2},\quad s=\frac{j_3+\sqrt{j_3^2+8j_1^2}}{2j_1}.\nonumber\\
&&\mbox{Moreover,}\quad w_+ w_-=1,\quad b_{1\pm}=b_{2\mp},\quad K=(\frac{j_1}{s})^\frac{2}{3}.\nonumber
\end{eqnarray}
        \item in the case of all different $j_1, j_2, j_3>0$:
\begin{eqnarray}
&&b_{1\pm}=\sqrt[3]{\frac{j_3}{t_0 y_\pm z_\pm}},\quad b_{2\pm}=y_{\pm} b_{1\pm},\quad b_{3\pm}=z_{\pm} b_{1\pm},\nonumber\\
&&z_{\pm}=\sqrt{\frac{y_\pm(j_1-j_2 y_\pm)}{j_ 2-j_1 y_\pm}},\quad y_{\pm}=\frac{t_0\pm\sqrt{t_0^2-4}}{2},\nonumber
\end{eqnarray}
where $t_0>2$ is the solution (it always exists, moreover, it is bigger than $\frac{j_2}{j_1}+\frac{j_ 1}{j_2}$) of the cubic equation $$j_1 j_2 t^3-(j_1^2+j_2^2+j_3^2)t^2+4j_3^2=0.$$
\begin{eqnarray}
\mbox{Moreover,}\quad y_+ y_-=1,\quad z_+ z_-=1,\quad K=(\frac{j_3}{t_0})^\frac{2}{3}.\nonumber
\end{eqnarray}
We can use the explicit Vieta or Cardano formulas for $t_0$:
$$t_0=\Omega+2 \Omega\cos(\frac{1}{3}\arccos(1-\frac{2\beta}{\Omega^3})),$$
$$\Omega:=\frac{\alpha+\beta}{3},\qquad \alpha:=A+\frac{1}{A}>2,\qquad \beta:=\frac{B^2}{A},\qquad A:=\frac{j_2}{j_1},\qquad B:=\frac{j_3}{j_1},$$
$$t_0=\Omega+L+ \frac{\Omega^2}{L},\qquad L:=\sqrt[3]{\Omega^3-2\beta+2\sqrt{\beta(\beta-\Omega^3)}}.$$
\end{enumerate}
\end{enumerate}
\end{lemma}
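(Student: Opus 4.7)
The plan is to reduce the three-variable system (\ref{eqn33}) to a single cubic equation in the parameter $t = y + 1/y$ with $y = b_2/b_1$, after first disposing of the degenerate items (1)--(4) by direct inspection.

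For item (1), I would argue that if $b_i$ and $b_k$ are both nonzero, then the $k$-th equation forces $b_m^2 = -b_i^2$, hence $b_i = 0$; so at most one $b_k$ is nonzero. Item (2) follows by an analogous cascade. For item (3), say $j_3 = 0$ and $j_1, j_2 \neq 0$: since $b_1^2 + b_2^2 > 0$ (forced by $j_1 \neq 0$), the third equation gives $b_3 = 0$ and the system collapses to the two-dimensional one of Lemma \ref{thn2}. For item (4), subtracting pairs of equations yields the identity $j_i - j_k = -(b_i - b_k)(b_i b_k - b_m^2)$ for $\{i,k,m\}=\{1,2,3\}$; with $j_1 = j_2 = j_3 \neq 0$, a short case analysis (using $j \neq 0$ to rule out $b_k = 0$) forces $b_1 = b_2 = b_3$, whence $2 b^3 = j$.

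For item (5), by the sign symmetry noted just before the lemma I take $j_1, j_2, j_3 > 0$, which forces every $b_k \neq 0$. Writing $y = b_2/b_1$ and $z = b_3/b_1$, the system becomes
$$b_1^3(y^2+z^2) = j_1,\qquad b_1^3 y(1+z^2) = j_2,\qquad b_1^3 z(1+y^2) = j_3.$$
The ratio of the second equation to the first yields $z^2 = y(j_2 y - j_1)/(j_1 y - j_2)$. Substituting this into the squared ratio of the third to the first and invoking $(1+y^2)^2/y^2 = t^2$ and $(y^2-1)^2/y^2 = t^2-4$ with $t := y + 1/y$ collapses the system to
$$f(t)\;:=\;j_1 j_2 t^3 - (j_1^2+j_2^2+j_3^2)\,t^2 + 4 j_3^2 = 0.$$
Then $f(0) = 4 j_3^2 > 0$, $f(2) = -4(j_1-j_2)^2 \leq 0$, $f(+\infty) = +\infty$, and the product of roots is $-4 j_3^2/(j_1 j_2) < 0$ by Vieta; hence $f$ has exactly two positive roots, precisely one of which exceeds $2$ (call it $t_0$). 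Evaluating $f$ at $T := (j_1^2+j_2^2)/(j_1 j_2)$ and using $j_1 j_2 T = j_1^2 + j_2^2$ gives the clean identity $f(T) = j_3^2(4-T^2) \leq 0$, showing $t_0 \geq T$. Since $y > 0$ forces $t \geq 2$, only the root $t_0$ produces admissible $y$; the pair $y_\pm = (t_0 \pm \sqrt{t_0^2-4})/2$ are the reciprocal positive roots of $y^2 - t_0 y + 1 = 0$, from which $z_\pm$ and $b_{1\pm}$ are recovered via the explicit formulas for $z^2$ and $b_1^3$.

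The inversion symmetry $b_k \mapsto K/b_k$ of Lemma \ref{thSym} swaps these two solutions, consistent with $y_+ y_- = 1$. Sub-cases (a) and (b) follow by factoring: when $j_1 = j_2$ the cubic factors as $(t-2)(j_1^2 t^2 - j_3^2 t - 2 j_3^2)$, and in (a) ($j_3 < j_1$) the quadratic's positive root is $\leq 2$ and hence inadmissible, so only $t=2$ survives, reducing the system to $4u^2 - 4 j_1 u + j_3^2 = 0$ in $u = b_1^3$ with two positive roots; in (b) ($j_3 > j_1$) the root $t=2$ now gives negative discriminant and is discarded, while the quadratic factor supplies the $t_0 > 2$ underlying the $w_\pm$ formulas. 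The explicit Cardano/Vieta expressions for $t_0$ follow by rescaling (setting $A=j_2/j_1$, $B=j_3/j_1$) and applying the standard trigonometric and radical root formulas to the resulting depressed cubic. The heart of the argument is the algebraic collapse to the cubic in $t$; the main obstacle is the sign-tracking that guarantees every admissible root of $f$ lifts back to a real, positive solution triple, keeping out extraneous roots introduced by the squaring step.
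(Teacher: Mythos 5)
Your proposal follows essentially the same route as the paper's Appendix A for the main case: the same substitution $y=b_2/b_1$, $z=b_3/b_1$, the same elimination $z^2=y(j_1-j_2y)/(j_2-j_1y)$, and the same reduction via $t=y+1/y$ to the cubic $j_1j_2t^3-(j_1^2+j_2^2+j_3^2)t^2+4j_3^2=0$, with the same sign analysis ($f(0)>0$, $f(2)\leq 0$, $f(j_1/j_2+j_2/j_1)\leq 0$, $f(+\infty)=+\infty$) isolating the unique admissible root $t_0>2$; your Vieta product-of-roots remark replaces the paper's $f(-\infty)=-\infty$ but serves the same purpose. The one genuine divergence is the $j_1=j_2$ sub-cases: you factor the cubic as $(t-2)(j_1^2t^2-j_3^2t-2j_3^2)$ and return to the original system at $t=2$, whereas the paper branches earlier on $(1-y)(y-z^2)=0$ and never forms the cubic there. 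Your version is more uniform, but be explicit that the elimination divides by $j_2-j_1y$, which vanishes exactly on the branch $j_1=j_2$, $y=1$ (i.e.\ $t=2$); so the cubic is only derived under $y\neq j_2/j_1$, and the conclusion ``only $t=2$ survives'' cannot be read off the cubic alone --- it is your separate quadratic $4u^2-4j_1u+j_3^2=0$ in $u=b_1^3$ that actually produces (and counts) the solutions on that branch. Items (1)--(4), which the paper dismisses as ``easily verified,'' are handled correctly by your cascade and difference-of-equations arguments.

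The one step you name but do not close --- that each admissible root of the cubic lifts to a real positive triple and that squaring introduces no extraneous solutions --- must be supplied, since the lemma asserts there are \emph{exactly} two solutions in case (5c). It is closable with what you already have. Squaring the third equation is harmless because both sides are positive once $z>0$; the only thing to check is that $y_\pm(j_1-j_2y_\pm)/(j_2-j_1y_\pm)>0$. Since $y>0$, this is equivalent to $(j_1-j_2y)(j_2-j_1y)>0$, and $(j_1-j_2y)(j_2-j_1y)=j_1j_2y^2-(j_1^2+j_2^2)y+j_1j_2$ is an upward parabola with roots $j_1/j_2$ and $j_2/j_1$, hence positive iff $y>\max(j_1/j_2,j_2/j_1)$ or $y<\min(j_1/j_2,j_2/j_1)$. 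Your inequality $t_0>T=j_1/j_2+j_2/j_1$ together with the monotonicity of $y\mapsto y+1/y$ on $(1,\infty)$ gives $y_+>\max(j_1/j_2,j_2/j_1)$ and $y_-=1/y_+<\min(j_1/j_2,j_2/j_1)$, so positivity holds for both roots and no root is extraneous. The paper asserts this positivity without detail but relies on the same inequality $t_0>A+1/A$; spelling it out as above would complete your argument. Finally, you leave the ``Moreover'' identities ($z_+z_-=1$ and $K=(j_3/t_0)^{2/3}=b_{k+}b_{k-}$) unverified; these are short computations from $y_+y_-=1$ and the explicit formula for $b_{1\pm}^3$, and should be included since they are part of the statement.
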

\begin{proof} The proof is rather cumbersome, we give it in Appendix A.
\end{proof}

\section{Results for the potential and the strength of the Yang-Mills field}\label{secstr}

In the case of the constant potential of the Yang-Mills field, we have the following expression for the strength
\begin{eqnarray}
F^{\mu\nu}=- [A^\mu, A^\nu]=-  [A^\mu_{\,\,a} \tau^a, A^\nu_{\,\,b} \tau^b]=-  A^\mu_{\,\,a} A^\nu_{\,\,b} \epsilon^{ab}_{\,\,\,\,\, c}\tau^c=F^{\mu\nu}_{\,\,\,\,\,c} \tau^c.\label{F}
\end{eqnarray}
If we take a system of coordinates depending on $Q\in\OO(n)$ and a gauge fixing depending on $P\in\SO(3)$ such that the matrices $A=||A^\mu_{\,\, a}||$ and $J=||J^\mu_{\,\, a}||$ are diagonal (see Theorem \ref{th1}), then the expressions $F^{\mu\nu}_{\,\,\,\,\,c}$
are nonzero only in the case of three different indices $\mu=a$, $\nu=b$, and $c$, which take the values $1, 2, 3$. For each solution, we calculate the invariant $F^2=F_{\mu\nu}F^{\mu\nu}$, which is present in the Lagrangian of the Yang-Mills field.

Using results of the previous section for the system (\ref{eq}) and the expression (\ref{F}), we obtain the following results for the potential $A$ and strength $F$ of the Yang-Mills field depending on the constant current $J$. The case $n=2$ is much simpler than the case $n\geq 3$, we discuss this case for the sake of completeness.

\textbf{In the case of dimension $n=2$:}
\begin{enumerate}
  \item In the case of zero current $J=0$, we have zero potential $A=0$ or nonzero potential (see Case 1 of Lemma \ref{thn2})
\begin{eqnarray}
A=\left(
     \begin{array}{ccc}
       a & 0 & 0 \\
       0 & 0 & 0 \\
     \end{array}
   \right),\qquad a\in \R\setminus\{0\}.\nonumber
\end{eqnarray}
In these cases, we have zero strength $F=0$ ($F^{\mu\nu}=0$). Note that this fact is already known (see \cite{Sch}, \cite{Sch2}).
 \item In the case $\rank (J)=1$, we have no constant solutions (see Case 2 of Lemma \ref{thn2}).
\item In the case $\rank (J)=2$, we have a unique solution (see Case 3 of Lemma \ref{thn2})
$$A=\left(
     \begin{array}{ccc}
       a_1 & 0 & 0 \\
       0 & a_2 & 0 \\
     \end{array}
   \right),\qquad a_1=-\sqrt[3]{\frac{j_2^2}{j_1}},\qquad a_2=\sqrt[3]{\frac{j_1^2}{j_2}}.
$$
For the strength, we have the following nonzero components
\begin{eqnarray}
F^{12}=-F^{21}=-\sqrt[3]{j_1 j_2}\tau^3\label{F1}
\end{eqnarray}
using specific system of coordinates and specific gauge fixing, where $j_1$ and $j_2$ are singular values of the matrix $J=||J^\mu_{\,\, a}||$. In this case, we obtain the following expression for the invariant $F^2=F_{\mu\nu}F^{\mu\nu}$:
\begin{eqnarray}
F^2=F_{\mu\nu}F^{\mu\nu}=-\frac{1}{2}\sqrt[3]{(j_1 j_2)^2} {\bf 1}\neq 0.\label{F2}
\end{eqnarray}
\end{enumerate}

\newpage

\textbf{In the case of dimension $n\geq 3$:}
  \begin{enumerate}
    \item In the case $J=0$, we have zero potential $A=0$ or nonzero potential (see Case 1 of Lemma \ref{thGenSol}):
\begin{eqnarray}
A=\left(
     \begin{array}{ccc}
       a & 0 & 0 \\
       0 & 0 & 0 \\
       \ldots & \ldots & \ldots \\
       0 & 0 & 0\\
     \end{array}
   \right),\qquad a\in \R\setminus\{0\}.\label{A}
\end{eqnarray}
In these cases, we have zero strength $F=0$.
    \item In the case $\rank (J)=1$, we have no constant solutions (see Case 2 of Lemma \ref{thGenSol}).
    \item In the case $\rank (J)=2$, we have a unique solution (see Case 3 of Lemma \ref{thGenSol}):
\begin{eqnarray}
A=\left(
     \begin{array}{ccc}
       a_1 & 0 & 0 \\
       0 & a_2 & 0 \\
       0 & 0 & 0\\
       \ldots & \ldots & \ldots \\
       0 & 0 & 0\\
     \end{array}
   \right),\qquad a_1=-\sqrt[3]{\frac{j_2^2}{j_1}},\qquad a_2=\sqrt[3]{\frac{j_1^2}{j_2}}.\label{A5}
\end{eqnarray}
For the strength, we have the following nonzero components (\ref{F1}) and again (\ref{F2}) using specific system of coordinates and gauge fixing, where $j_1$, $j_2$, and $j_3=0$ are singular values of the matrix $J$.
    \item In the case $\rank (J)=3$, we have one or two solutions.

In the specific case of all the same singular values $j:=j_1=j_2=j_3\neq 0$, we have a unique solution (see Case 4 of Lemma \ref{thGenSol})
\begin{eqnarray}
A=\left(
     \begin{array}{ccc}
       a & 0 & 0 \\
       0 & a & 0 \\
       0 & 0 & a\\
       0 & 0 & 0 \\
       \ldots & \ldots & \ldots \\
       0 & 0 & 0\\
     \end{array}
   \right),\qquad a=-\sqrt[3]{\frac{j}{2}}.\label{A6}
\end{eqnarray}
We have the following nonzero components of the strength:
\begin{eqnarray}
&&F^{12}=-F^{21}=-\sqrt[3]{\frac{j^2}{4}}\tau^3, \quad F^{23}=-F^{32}=-\sqrt[3]{\frac{j^2}{4}}\tau^1, \nonumber\\
&&F^{31}=-F^{13}=-\sqrt[3]{\frac{j^2}{4}}\tau^2.\label{F6}
\end{eqnarray}
In this case, we have
\begin{eqnarray}
F^2=F_{\mu\nu}F^{\mu\nu}=\frac{-3}{2}\sqrt[3]{\frac{j^4}{16}}{\bf 1}\neq 0.\label{F26}
\end{eqnarray}
In the case of not all the same singular values $j_1$, $j_2$, $j_3$ of the matrix $J$, we have two different solutions
\begin{eqnarray}
A=\left(
     \begin{array}{ccc}
       -b_{1\pm} & 0 & 0 \\
       0 & -b_{2\pm} & 0 \\
       0 & 0 & -b_{3\pm}\\
       0 & 0 & 0 \\
       \ldots & \ldots & \ldots \\
       0 & 0 & 0\\
     \end{array}
   \right),\label{A7}
\end{eqnarray}
where $b_{k\pm}$, $k=1, 2, 3$ are from Case 5 of Lemma \ref{thGenSol}.

We have the following nonzero components of the strength:
\begin{eqnarray}
&&F_{\pm}^{12}=-F_{\pm}^{21}=-b_{1\pm} b_{2\pm} \tau^3,\quad F_{\pm}^{23}=-F_{\pm}^{32}=-b_{2\pm} b_{3\pm} \tau^1,\nonumber\\
&&F_{\pm}^{31}=-F_{\pm}^{13}=-b_{3\pm} b_{1\pm}\tau^2.\label{F7}
\end{eqnarray}
In this case, we have
\begin{eqnarray}
F_{\pm}^2=F_{\mu\nu\,\pm} F_{\pm}^{\mu\nu}=-\frac{1}{2}((b_{1\pm}b_{2\pm})^2+(b_{2\pm}b_{3\pm})^2+(b_{3\pm}b_{1\pm})^2){\bf 1}\neq 0.\label{F100}
\end{eqnarray}
\end{enumerate}
In the next lemma, we give the explicit form of (\ref{F100}).

\begin{lemma}\label{lemF2} In the case of not all the same $j_1$, $j_2$, $j_3$, (\ref{F100}) takes the form:
\begin{enumerate}
  \item in the case $j_1=j_2>j_3>0$ (or similar cases with circular permutation):
\begin{eqnarray}
F^2_{\pm}=\frac{-K^2 (1+2z_{\pm}^2)}{2 z_{\pm}^{\frac{4}{3}}}{\bf 1},\qquad F^2_+\neq F^2_-,\label{F21}\\
\mbox{where}\quad z_{\pm}=\frac{j_1\pm\sqrt{j_1^2-j_3^2}}{j_3},\quad K=(\frac{j_3}{2})^{\frac{2}{3}}.\nonumber
\end{eqnarray}
  \item in the case $j_3>j_1=j_2>0$ (or similar cases with circular permutation):
\begin{eqnarray}
F^2_{\pm}=\frac{-K^2(s^2-1)}{2}{\bf 1},\qquad  F^2_+=F^2_-,\label{F22}\\
\mbox{where}\quad s=\frac{j_3+\sqrt{j_3^2+8j_1^2}}{2j_1}>2,\quad K=(\frac{j_1}{s})^{\frac{2}{3}}.\nonumber
\end{eqnarray}
  \item in the case of all different $j_1$, $j_2$, $j_3 >0$:
\begin{eqnarray}
F^2_{\pm}=\frac{-K^2 (y_{\pm}^2+z_{\pm}^2+y_{\pm}^2 z_{\pm}^2)}{2(y_{\pm} z_{\pm})^{\frac{4}{3}}}{\bf 1},\qquad  F^2_+\neq F^2_-,\label{F23}\\
\mbox{where $K=(\frac{j_3}{t_0})^{\frac{2}{3}}$, and $y_{\pm}$, $z_\pm$, $t_0$ are from Case (5) - (c) of Lemma \ref{thGenSol}.}\nonumber
\end{eqnarray}
\end{enumerate}
In all cases of Lemma, the expression $K$ is the invariant for each pair of solutions (see Lemmas \ref{thSym} and \ref{thGenSol}).
\end{lemma}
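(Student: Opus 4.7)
My plan is to simply substitute the explicit formulas for $b_{1\pm}, b_{2\pm}, b_{3\pm}$ from Cases (5)(a), (5)(b), (5)(c) of Lemma \ref{thGenSol} into the right-hand side of (\ref{F100}) and simplify. The strategy that makes all three sub-cases uniform is to factor out a common fourth power of whichever $b$ is taken as the ``base'' in that sub-case, so that the remaining sum becomes a simple polynomial in the auxiliary ratios $z_{\pm}$, $w_{\pm}$, or $y_{\pm}, z_{\pm}$, and then to rewrite that fourth power in terms of the invariant $K$ using Lemma \ref{thSym}.

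Concretely, in Case 1 ($j_1=j_2>j_3>0$) I take $b:=b_{1\pm}=b_{2\pm}$ and $b_{3\pm}=z_\pm b$, so that
\[
(b_{1\pm}b_{2\pm})^2+(b_{2\pm}b_{3\pm})^2+(b_{3\pm}b_{1\pm})^2
=b^{4}\bigl(1+2 z_\pm^{2}\bigr).
\]
Since $b^{4}=(j_3/(2z_\pm))^{4/3}=K^{2}/z_\pm^{4/3}$ with $K=(j_3/2)^{2/3}$, the formula (\ref{F21}) follows. In Case 2 ($j_3>j_1=j_2>0$) I factor out $b_3^{4}$ using $b_{1\pm}=b_3/w_\pm$, $b_{2\pm}=w_\pm b_3$, obtaining
\[
b_3^{4}\Bigl(1+w_\pm^{2}+w_\pm^{-2}\Bigr).
\]
Here the identities $w_+w_-=1$ and $w_++w_-=s$ (from the quadratic $w^{2}-sw+1=0$) give $w_\pm^{2}+w_\pm^{-2}=s^{2}-2$, so the bracket collapses to $s^{2}-1$; combined with $b_3^{4}=(j_1/s)^{4/3}=K^{2}$ this yields (\ref{F22}), and makes the $\pm$-independence manifest. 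In Case 3 I take $b:=b_{1\pm}$ so that $b_{2\pm}=y_\pm b$, $b_{3\pm}=z_\pm b$, which gives $b^{4}(y_\pm^{2}+z_\pm^{2}+y_\pm^{2}z_\pm^{2})$; combined with $b^{4}=(j_3/(t_0 y_\pm z_\pm))^{4/3}=K^{2}/(y_\pm z_\pm)^{4/3}$ this gives (\ref{F23}).

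There is no real obstacle here: the computation is mechanical once one recognizes that in each sub-case the two or three $b$'s share a single common factor, with the remaining dimensionless ratios already given by Lemma \ref{thGenSol}. The only mild care needed is in Case 2, where I must use the Vieta relations $w_+w_-=1$, $w_++w_-=s$ to eliminate the square root hidden in $w_\pm$; this is also what produces the equality $F^{2}_{+}=F^{2}_{-}$ stated in (\ref{F22}), in contrast to Cases 1 and 3 where $F^{2}_{+}\neq F^{2}_{-}$ because the expressions depend nontrivially on the sign of the square root through $z_\pm$ or $y_\pm z_\pm$. Finally, the last sentence of the lemma—that $K$ is the invariant for each pair of solutions—is already contained in Lemmas \ref{thSym} and \ref{thGenSol}, so nothing further need be proved.
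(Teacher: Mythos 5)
Your derivation of the three formulas for $F^2_\pm$ is correct and is exactly the computation the paper performs in Appendix B: factor out the fourth power of the base $b$, reduce the bracket to a polynomial in the dimensionless ratios, and rewrite $b^4$ via $K$. The Vieta manipulation in Case 2 that collapses the bracket to $s^2-1$ and yields $F^2_+=F^2_-$ also matches the paper.

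However, there is a genuine gap in your treatment of the claims $F^2_+\neq F^2_-$ in Cases 1 and 3. Your justification --- that ``the expressions depend nontrivially on the sign of the square root through $z_\pm$ or $y_\pm z_\pm$'' --- is not a proof, and your own Case 2 shows why: there the expression also appears to depend on $w_\pm$ before simplification, yet the two values coincide because $1+w^2+w^{-2}$ is invariant under $w\mapsto 1/w$. In Cases 1 and 3 the two solutions are likewise related by the involutions $z_+z_-=1$ (resp. $y_+y_-=1$, $z_+z_-=1$), so you must actually verify that the specific functions are \emph{not} invariant under these involutions. For Case 1 this is a short argument: assuming $\frac{1+2z_+^2}{z_+^{4/3}}=\frac{1+2z_-^2}{z_-^{4/3}}$ and using $z_+z_-=1$ leads to $(z_-^{2/3}+z_+^{2/3}-2)(z_-^{2/3}-z_+^{2/3})=0$, which is impossible since $z_-^{2/3}+z_+^{2/3}>2$ for $z_+\neq z_-$. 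For Case 3 the paper needs a substantially longer argument: the equality $F^2_+=F^2_-$ is reduced to a polynomial identity in $y_+$, then (via $t=y_++1/y_+$ and $d=t-2/t$) to $Ad^2-(1+A^2)d+A=0$, whose roots $d=A$ and $d=1/A$ are each shown to contradict the cubic equation $At^3-(B^2+A^2+1)t^2+4B^2=0$ satisfied by $t_0$, using the standing assumptions $A\neq\pm1$, $B\neq\pm1$, $A\neq\pm B$. None of this is automatic, and without it the inequality assertions in (\ref{F21}) and (\ref{F23}) remain unproved.
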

\begin{proof} We give the proof in Appendix B.
\end{proof}

Note that in Case 2 of Lemma \ref{lemF2}, we have two constant solutions of the Yang-Mills equations with the same invariant $F^2=F^2_\pm$. In each of two Cases 1 and 3, we have two constant solutions of the Yang-Mills equations with different invariants $F_+^2\neq F^2_-$.

We summarize the results for the case of arbitrary Euclidean space $\R^n$, $n\geq 2$, in Table 1.

\begin{table}[ht]
\tbl{All constant solutions of $\SU(2)$ Yang-Mills equations in $\R^n$.}{%
\begin{tabular}{@{\quad}c@{\qquad}c@{\qquad}c@{\qquad}c@{\qquad}c@{\qquad}c@{\qquad}c@{\qquad}} \toprule
$n$ & $\rank (J)$ & additional conditions & $\rank (A)$ & $A$ & $F$ & $F^2$\\ \colrule
$n\geq 2$ & 0 &  & 0 & $A=0$ & $F=0$ & $F^2=0$  \\
  $n\geq 2$ & 0 &  & 1 & see (\ref{A}) & $F=0$ & $F^2=0$  \\
  $n\geq 2$ & 1 &  &   & $\o$ & $\o$ & $\o$  \\
  $n\geq 2$ & 2 &  & 2 & see (\ref{A5})& see (\ref{F1}) & see (\ref{F2})  \\
  $n\geq 3$ & 3 & $j_1=j_2=j_3$ & 3 & see (\ref{A6}) & see (\ref{F6}) & see (\ref{F26})  \\
  $n\geq 3$ & 3 & $j_1=j_2>j_3$ & 3 & see (\ref{A7}) & see (\ref{F7}) & see (\ref{F21})   \\
  $n\geq 3$ & 3 & $j_3>j_1=j_2$ & 3 & see (\ref{A7}) & see (\ref{F7}) & see (\ref{F22})   \\
  $n\geq 3$ & 3 & all different $j_1, j_2, j_3$ & 3 & see (\ref{A7}) & see (\ref{F7}) & see (\ref{F23})   \\ \botrule
\end{tabular}}
\end{table}

\section{Conclusions}

The main result of this paper is the presentation of all constant solutions of the Yang-Mills equations with $\SU(2)$ gauge symmetry for an arbitrary constant current in Euclidean space of arbitrary finite dimension. Using the invariance of the Yang-Mills equations under the orthogonal transformations of coordinates and gauge invariance, we choose a specific system of coordinates and a specific gauge fixing for each constant current and obtain all constant solutions of the Yang-Mills equations in this system of coordinates with this gauge fixing, and then in the original system of coordinates with the original gauge fixing (see Remarks 2 and 3). We prove that the number ($0$, $1$, or $2$) of constant solutions of the Yang-Mills equations (solutions of the system (\ref{eq})) in terms of the strength $F$ (\ref{F}) depends on the rank of the matrix $J$ and, sometimes, on the singular values of this matrix (see Section \ref{secstr}). The explicit form of these solutions and the invariant $F^2$ can always be written using singular values of the matrix $J$.

We plan to solve the same problem as in this paper, but in pseudo-Euclidean space of arbitrary finite dimension, in particular, in the case of Minkowski space. This will allow us to obtain all constant solutions of the Dirac-Yang-Mills equations, which is interesting for applications. Another task is to consider the same problem on curved manifolds. We need another technique to solve the same problem for the case of the Lie group $SU(3)$, which is important for describing strong interactions.

Note that now we can consider nonconstant solutions of Yang-Mills equations in the form of series of perturbation theory using all constant solutions from Lemmas \ref{thn2} and \ref{thGenSol} as a zeroth approximation. The problem reduces to solving systems of linear partial differential equations. This will allow us to give a local classification of all solutions of the classical $\SU(2)$ Yang-Mills equations.

The results of this paper are new and can be used to solve some problems in particle physics, in particular, in describing physical vacuum \cite{Actor}, \cite{Gr}, \cite{Ja}, \cite{nian}. In this paper, we discuss mathematical structures and constructions. Relating the proposed mathematical constructions to real world objects goes beyond the scope of this investigation.  The explicit formulas for solutions (see the results of Section 3 and Lemmas \ref{thn2}, \ref{thSym}, and \ref{thGenSol}) are fundamental for the Yang-Mills field and should be interesting for physicists.

\section*{Acknowledgements}

The author is grateful to N. G. Marchuk for fruitful discussions.
The author is grateful to the participants of the 2nd JNMP Conference on Nonlinear Mathematical Physics (Chile, Santiago, 2019) for useful comments. The author is grateful to the reviewers for their careful reading of the paper and helpful remarks.

This work is supported by the Russian Science Foundation (project 18-71-00010).

\appendix

\section{The proof of Lemma \ref{thGenSol}.}

The first four cases of Lemma \ref{thGenSol} are easily verified.

Let us consider the case of not all the same positive $j_1, j_2, j_3$. As we mentioned before the lemma, we can assume that $j_k>0$ and $b_k>0$ because if we change the sign of $j_k$, then the sign of $b_k$ is also changed.

We use the following change of variables
$$x=b_1>0,\qquad y=\frac{b_2}{b_1}>0,\qquad z=\frac{b_3}{b_1}>0.$$
We obtain
\begin{eqnarray}
j_1=x^3(y^2+z^2),\quad
j_2=y x^3(1+z^2),\quad
j_3=z x^3(1+y^2).
\end{eqnarray}
Using notation
$$A=\frac{j_2}{j_1}>0,\qquad B=\frac{j_3}{j_1}>0,$$
we get the system for two variables $y$ and $z$:
\begin{eqnarray}y(1+z^2)=A(y^2+z^2),\qquad z(1+y^2)=B(y^2+z^2).\label{2ur}\end{eqnarray}
For the variable $x=b_1$, we have
\begin{eqnarray}
b_1=x=\sqrt[3]{\frac{j_1}{(y^2+z^2)}}.\label{x}
\end{eqnarray}
From the first equation (\ref{2ur}), we obtain
\begin{eqnarray}
(A-y)z^2=y(1-Ay).\label{67}
\end{eqnarray}
Let us consider two cases: $A=1$ and $A\neq 1$.

If $A=1$ (and $B\neq 1$, because we consider the case of not all the same $j_1$, $j_2$, $j_3$), then we can rewrite $(1-y)z^2=y(1-y)$ in the form $(1-y)(y-z^2)=0$. If $y=1$, then we substitute this condition into the second equation (\ref{2ur}) and obtain $2z=B(1+z^2)$. If $0<B<1$, then $z=\frac{1\pm\sqrt{1-B^2}}{B}$. For $B>1$, there is no solution of this type. If $y=z^2$, then we substitute this condition into the second equation (\ref{2ur}) and get $z(1+z^4)=B(z^2+z^4)$, $z^5-Bz^4-Bz^2+z=0$. Dividing both sides by $z^2$ and using notation $s=z+\frac{1}{z}>0$, we get $s=\frac{B\pm\sqrt{B^2+8}}{2}$. We have $z>0$ and $s>2$ (if $s=2$, then $z=y=1$ and we obtain the case $j_ 1=j_2=j_3$, which is not considered now). Therefore, $B>1$ and we have two expressions $z=\frac{s\pm\sqrt{s^2-4}}{2}$ for one $s=\frac{B+\sqrt{B^2+8}}{2}$.

The results for the case $A=1$ can be summarized as follows. If $j_1=j_2>j_3$ ($A=1$, $0<B<1$), then we have two solutions
$$b_{1\pm}=b_{2\pm}=\sqrt[3]{\frac{j_1}{1+z_{\pm}^2}},\qquad b_{3\pm}=z_{\pm} b_{1\pm},\qquad z_{\pm}=\frac{1\pm\sqrt{1-B^2}}{B},\qquad B=\frac{j_3}{j_1}.$$
Using $z_+ z_-=1$ and $\frac{z_\pm}{1+z_\pm^2}=\frac{B}{2}$, we can verify that $K=b_{1+}b_{1-}=b_{2+}b_{2-}=b_{3+}b_{3-}= (b_{1+}b_{2+}b_{3+})^{\frac{2}{3}}=(b_{1-}b_{2-}b_{3-})^{\frac{2}{3}}= (\frac{Bj_1}{2})^\frac{2}{3}=(\frac{j_3}{2})^\frac{2}{3}$
and rewrite solutions in the following form
$$b_{1\pm}=b_{2\pm}=\sqrt[3]{\frac{j_3}{2z_{\pm}}},\qquad b_{3\pm}=z_{\pm} b_{1\pm},\qquad z_{\pm}=\frac{j_1\pm\sqrt{j_1^2-j_3^2}}{j_3}.$$
If $j_3>j_1=j_2$ ($A=1$, $B>1$), then we have two solutions
$$b_{1\pm}=\sqrt[3]{\frac{j_1}{w^2+w^4}},\qquad b_{2\pm}=w_{\pm}^2b_{1\pm},\qquad b_{3\pm}=w_{\pm} b_{1\pm},$$ $$w_{\pm}=\frac{s\pm\sqrt{s^2-4}}{2},\qquad s=\frac{B+\sqrt{B^2+8}}{2},\qquad B=\frac{j_3}{j_1}.$$
Using $\frac{w_{\pm}}{1+w_{\pm}^2}=\frac{1}{s}$ and $w_+ w_-=1$, we can verify that $K=b_{1+}b_{1-}=b_{2+}b_{2-}=b_{3+}b_{3-}= (b_{1+}b_{2+}b_{3+})^{\frac{2}{3}}=(b_{1-}b_{2-}b_{3-})^{\frac{2}{3}}= (\frac{j_1}{s})^\frac{2}{3}$. Moreover, $b_{3+}=b_{3-}$, $b_{1+}=b_{2-}$, $b_{1-}=b_{2+}$ and we can rewrite solutions in the following form
$$b_{1\pm}=\frac{1}{w_\pm}b_{3},\qquad b_{2\pm}=w_{\pm}b_{3},\qquad b_{3\pm}=b_{3}=\sqrt[3]{\frac{j_1}{s}},$$ $$w_{\pm}=\frac{s\pm\sqrt{s^2-4}}{2},\qquad s=\frac{j_3+\sqrt{j_3^2+8j_1^2}}{2j_1}.$$

Now let us consider the case $A\neq 1$. If $B=1$, then we can similarly consider this case as previous case ($A=1$, $B\neq 1$), since the system (\ref{2ur}) is symmetric w.r.t. the change $y \leftrightarrow z$, $A \leftrightarrow B$. Therefore, let us consider the remaining case, when $A\neq 1$ and $B\neq 1$. If $A=B$, then $j_2=j_3$ and this case is similar to the previous one again. Let us consider the case $A\neq B$, i.e. all $j_k$, $k=1,2,3$ are different now.

If $A\neq 1$, then $A\neq y$. Really, suppose that $A=y$. Then from (\ref{67}) we obtain $A=1$ and a contradiction. Since $A\neq y$, we get from (\ref{67})
\begin{eqnarray}
z^2=\frac{y(1-Ay)}{A-y}.\label{y2}
\end{eqnarray}
Because all variables are positive, the second equation (\ref{2ur}) is equivalent to $z^2(1+y^2)^2=B^2(z^2+y^2)^2$. Substituting (\ref{y2}) into this expression, we get
$$\frac{y(1-Ay)}{A-y}(1+y^2)^2=B^2(\frac{y(1-Ay)}{A-y}+y^2)^2.$$
Note, that when we will find $y$ from this equation, expression $\frac{y(1-Ay)}{A-y}$ will be positive. Therefore $z=\sqrt{\frac{y(1-Ay)}{A-y}}$. We have
$$\frac{y(1-Ay)(1+y^2)^2}{A-y}=\frac{B^2(y-y^3)^2}{(A-y)^2},\quad (1-Ay)(1+y^2)^2(A-y)=B^2y(1-y^2)^2,$$
$$Ay^6-(B^2+A^2+1)y^5+3Ay^4+2(B^2-A^2-1)y^3+3Ay^2-(B^2+A^2+1)y+A=0.$$
Dividing both sides by $y^3$ and using notation $t=y+\frac{1}{y}>0$ ($t^2=y^2+\frac{1}{y^2}+2$, $t^3=y^3+3y+3\frac{1}{y}+\frac{1}{y^3}$), we get
\begin{eqnarray}
A(y^3+\frac{1}{y^3})-(B^2+A^2+1)(y^2+\frac{1}{y^2})+3A(y+\frac{1}{y})+2(B^2-A^2-1)=0,\nonumber\\
A(t^3-3t)-(B^2+A^2+1)(t^2-2)+3At+2(B^2-A^2-1)=0,\nonumber\\
At^3-(B^2+A^2+1)t^2+4B^2=0.\nonumber
\end{eqnarray}
We obtain the following cubic equation
\begin{eqnarray}
f(t):=t^3-(\frac{B^2}{A}+A+\frac{1}{A})t^2+4\frac{B^2}{A}=0.\label{kub}
\end{eqnarray}
We are interested only in the positive solutions $t>0$ of this equation. We have $y^2-ty+1=0$. This equation has positive solutions $y=\frac{t\pm\sqrt{t^2-4}}{2}$ only in the case $t\geq 2$.

We have $f(-\infty)=-\infty$, $f(0)=\frac{4B^2}{A}>0$, $f(2)=-\frac{4(A-1)^2}{A}<0$, $f(+\infty)=+\infty$. This means that the cubic equation (\ref{kub}) has one negative solution, one solution between $0$ and $2$, and one solution $t_0>2$. Moreover, solution $t_0$ is bigger than $A+\frac{1}{A}$, because $f(A+\frac{1}{A})=-\frac{B^2 (A^2-1)^2}{A^3}<0$. We can calculate $t_0$ using Cardano formulas.

Finally, if $A\neq 1$ and $B\neq 1$, then we have two solutions
$$b_{1\pm}=\sqrt[3]{\frac{j_1}{y_{\pm}^2+z_{\pm}^2}},\qquad b_{2\pm}=y_{\pm} b_{1\pm},\qquad b_{3\pm}=z_{\pm} b_{1\pm},$$
$$z_{\pm}=\sqrt{\frac{y_\pm(1-Ay_\pm)}{A-y_\pm}},\qquad y_{\pm}=\frac{t_0\pm\sqrt{t_0^2-4}}{2},$$
where $t_0=t_0(A,B)>2$ is a solution of the equation (\ref{kub}).
Using $\frac{y_{\pm}z_{\pm}}{y_{\pm}^2+z_{\pm}^2}=\frac{B y_\pm}{1+y_\pm^2}=\frac{B}{t_0}$, $y_+ y_-=1$, and $z_+ z_-=1$, we can verify that $K=b_{1+}b_{1-}=b_{2+}b_{2-}=b_{3+}b_{3-}= (b_{1+}b_{2+}b_{3+})^{\frac{2}{3}}=(b_{1-}b_{2-}b_{3-})^{\frac{2}{3}}= (\frac{j_3}{t_0})^\frac{2}{3}$ and rewrite solutions in the following form
$$b_{1\pm}=\sqrt[3]{\frac{j_3}{t_0 y_\pm z_\pm}},\qquad b_{2\pm}=y_{\pm} b_{1\pm},\qquad b_{3\pm}=z_{\pm} b_{1\pm},$$ $$z_{\pm}=\sqrt{\frac{y_\pm(1-Ay_\pm)}{A-y_\pm}},\qquad y_{\pm}=\frac{t_0\pm\sqrt{t_0^2-4}}{2},$$
where $t_0=t_0(A, B)>2$ is a solution of the equation (\ref{kub}). The lemma is proved.

\section{The proof of Lemma \ref{lemF2}.}

1) In Case (5) - (a) of Lemma \ref{thGenSol}, using
$$b_{1\pm}=b_{2\pm},\qquad b_{3\pm}=z_{\pm}b_{1\pm},\qquad b_{1\pm}=\sqrt[3]{\frac{j_3}{2z_{\pm}}},\qquad K=(\frac{j_3}{2})^\frac{2}{3},$$
we obtain
\begin{eqnarray}
F_{\pm}^2&=&-\frac{1}{2}((b_{1\pm}b_{2\pm})^2+(b_{2\pm}b_{3\pm})^2+(b_{3\pm}b_{1\pm})^2){\bf 1}= -\frac{1}{2}b_{1\pm}^4(1+2z_{\pm}^2){\bf 1}\nonumber\\
&=&-\frac{1}{2}(\frac{j_3}{2z_{\pm}})^{\frac{4}{3}}(1+2z_{\pm}^2){\bf 1}=\frac{-K^2 (1+2z_{\pm}^2)}{2 z_{\pm}^{\frac{4}{3}}}{\bf 1}.\nonumber
\end{eqnarray}
Let us prove that $F^2_+ \neq F^2_-$ in this case. Suppose that we have $F^2_+=F^2_-$, i.e.
$$
\frac{1+2z_+^2}{z_+^{\frac{4}{3}}}=\frac{1+2z_-^2}{z_-^{\frac{4}{3}}},\qquad
z_-^{\frac{4}{3}}+2z_+^2z_-^{\frac{4}{3}}=z_+^{\frac{4}{3}}+2z_-^2z_+^{\frac{4}{3}}.
$$
Using $z_+ z_-=1$, we get
$$
z_-^{\frac{4}{3}}+2z_+^{\frac{2}{3}}=z_+^{\frac{4}{3}}+2z_-^{\frac{2}{3}},\qquad (z_-^{\frac{2}{3}}-1)^2=(z_+^{\frac{2}{3}}-1)^2,\qquad
(z_-^{\frac{2}{3}}+z_+^{\frac{2}{3}}-2)(z_-^{\frac{2}{3}}-z_+^{\frac{2}{3}})=0,
$$
which is not possible, because $z_-^{\frac{2}{3}}+z_+^{\frac{2}{3}}\geq 2 (z_- z_+)^{\frac{1}{3}}=2$ and $z_+$, $z_-$ do not equal $\pm 1$.

2) In Case (5) - (b) of Lemma \ref{thGenSol}, using
$$b_{1\pm}=\frac{1}{w_{\pm}}b_3,\qquad b_{2\pm}=w_{\pm}b_{3},\qquad b_{3\pm}=b_{3}=\sqrt[3]{\frac{j_1}{s}},\qquad K=(\frac{j_1}{s})^{\frac{2}{3}},$$
we obtain
\begin{eqnarray}
F^2_{\pm}&=&-\frac{1}{2}((b_{1\pm}b_{2\pm})^2+(b_{2\pm}b_{3\pm})^2+(b_{3\pm}b_{1\pm})^2){\bf 1}= -\frac{1}{2}b_{3\pm}^4(1+w_{\pm}^2+\frac{1}{w_{\pm}^2}){\bf 1}\nonumber\\
&=&-\frac{1}{2}(\frac{j_1}{s})^{\frac{4}{3}}(1+w_{\pm}^2+\frac{1}{w_{\pm}^2}){\bf 1}=-\frac{K^2}{2} (1+w_{\pm}^2+\frac{1}{w_{\pm}^2}){\bf 1}=-\frac{K^2}{2} (s^2-1){\bf 1}.\nonumber
\end{eqnarray}
In the last equality, we used $w_{\pm}+\frac{1}{w_{\pm}}=s$, i.e. $w_{\pm}^2+\frac{1}{w_{\pm}^2}=s^2-2$.

We have $F^2_+=F^2_-$, because $F_{\pm}^2$ does not depend on $w_{\pm}$ in this case.

3) In Case (5) - (c) of Lemma \ref{thGenSol}, using
$$b_{2\pm}=y_{\pm}b_{1\pm},\qquad b_{3\pm}=z_{\pm}b_{1\pm},\qquad b_{1\pm}=\sqrt[3]{\frac{j_3}{t_0 y_{\pm}z_{\pm}}},\qquad K=(\frac{j_3}{t_0})^{\frac{2}{3}},$$
we obtain
\begin{eqnarray}
F^2_{\pm}&=&-\frac{1}{2}((b_{1\pm}b_{2\pm})^2+(b_{2\pm}b_{3\pm})^2+(b_{3\pm}b_{1\pm})^2){\bf 1}= -\frac{1}{2}b_{1\pm}^4(y_{\pm}^2+z_{\pm}^2 + y_{\pm}^2 z_{\pm}^2){\bf 1}\nonumber\\
&=&-\frac{1}{2}(\frac{j_3}{t_0 y_{\pm}z_{\pm}})^{\frac{4}{3}}(y_{\pm}^2+z_{\pm}^2 + y_{\pm}^2 z_{\pm}^2){\bf 1}=-\frac{K^2(y_{\pm}^2+z_{\pm}^2 + y_{\pm}^2 z_{\pm}^2)}{2(y_{\pm}z_{\pm})^{\frac{4}{3}}} {\bf 1}.\nonumber
\end{eqnarray}
Using $z_\pm^2=\frac{y_{\pm}(1-Ay_{\pm})}{A-y_{\pm}}$, we also get
\begin{eqnarray}
F^2_{\pm}&=&-\frac{K^2(y_{\pm}^2+z_{\pm}^2 + y_{\pm}^2 z_{\pm}^2)}{2(y_{\pm}z_{\pm})^{\frac{4}{3}}} {\bf 1}=
-\frac{K^2(y_{\pm}^2+(1+ y_{\pm}^2)\frac{y_{\pm}(1-Ay_{\pm})}{A-y_{\pm}}))}{2y_{\pm}^{\frac{4}{3}}(\frac{y_{\pm}(1-Ay_{\pm})}{A-y_{\pm}})^{\frac{2}{3}}} {\bf 1}\nonumber\\
&=&\frac{-K^2(1-Ay_{\pm}^3)}{2y_{\pm}(A-y_{\pm})^{\frac{1}{3}}(1-Ay_{\pm})^{\frac{2}{3}}}{\bf 1}.\nonumber
\end{eqnarray}

Let us prove that $F^2_+ \neq F^2_-$ in this case. Suppose that we have $F^2_+=F^2_-$, i.e.
$$
\frac{(1-Ay_{+}^3)^3}{y^3_{+}(A-y_{+})(1-Ay_{+})^2}=\frac{(1-Ay_{-}^3)^3}{y^3_{-}(A-y_{-})(1-Ay_{-})^2}.
$$
Using $y_-=y_+^{-1}$, we get
$$y_+^3(A-y_+)(1-Ay_+)^2(1-\frac{A}{y_+^3})^3=(1-Ay_+^3)^3\frac{1}{y_+^3}(A-\frac{1}{y_+})(1-\frac{A}{y_+})^2,$$
$$(y_+^3-A)^3(1-Ay_+)=(1-Ay_+^3)^3(y_+-A),$$
$$(A^3-A)y_+^{10}+(1-A^4)y_+^9+3(A^3-A)y_+^6+3(A-A^3)y_+^4+(A^4-1)y_++(A-A^3)=0.$$
Dividing both sides of the equation by $A^2-1\neq 0$ and $y_+^5\neq 0$, we obtain
$$A(y_+^5-\frac{1}{y_+^5})-(1+A^2)(y_+^4-\frac{1}{y_+^4})+3A(y_+ -\frac{1}{y_+})=0.$$
Dividing both sides of the equation by $(y_+-\frac{1}{y_+})\neq 0$, we obtain
$$A(y_+^4+y_+^2+1+\frac{1}{y_+^2}+\frac{1}{y_+^4})-(1+A^2)(y_+^3+y_++\frac{1}{y_+}+\frac{1}{y_+^3})+3A=0.$$
Using $t=y_+ +\frac{1}{y_+}=y_+ + y_-$, we have
$$y_+^2+\frac{1}{y_+^2}=t^2-2,\quad y_+^3+\frac{1}{y_+^3}=t^3-3t,\quad y_+^4+\frac{1}{y_+^4}=t^4-4t^2+2$$
and obtain
$$At^4-(1+A^3)t^3-3At^2+2(1+A^2)t+4A=0.$$
Dividing by $t\neq 0$, we get
$$A(t^2+\frac{4}{t^2})-(1+A^2)(t-\frac{2}{t})-3A=0.$$
Using $d:=t-\frac{2}{t}$, we have $t^2+\frac{4}{t^2}=d^2+4$ and obtain
$$Ad^2-(1+A^2)d+A=0,\qquad \mbox{i.e. $d=A$, $d=\frac{1}{A}$.}$$
If $d=A$, then
\begin{eqnarray}
t^2-At-2=0.\label{QQ1}
\end{eqnarray}
But it is in a contradiction with
\begin{eqnarray}
At^3-(B^2+A^2+1)t^2+4B^2=0.\label{QQ2}
\end{eqnarray}
Really, multiplying both sides of (\ref{QQ1}) by $At$, we get
\begin{eqnarray}
At^3-A^2t^2-2At=0.\label{QQ3}
\end{eqnarray}
From (\ref{QQ2}) and (\ref{QQ3}), we obtain
\begin{eqnarray}
(1+B^2)t^2-2At-4B^2=0.\label{QQ4}
\end{eqnarray}
From (\ref{QQ4}) and (\ref{QQ1}), we get
$$\frac{2At+4B^2}{1+B^2}=At+2,\quad tA(1-B^2)=2(1-B^2),\quad t=\frac{2}{A},$$
because $B\neq \pm 1$. Substituting $t=\frac{2}{A}$ into (\ref{QQ1}), we get $\frac{4}{A^2}=4$, i.e. a contradiction, because $A\neq \pm 1$.

If $d=A^{-1}$, then
\begin{eqnarray}
At^2-t-2A=0.\label{Q1}
\end{eqnarray}
But it is in a contradiction with
\begin{eqnarray}
At^3-(B^2+A^2+1)t^2+4B^2=0.\label{Q2}
\end{eqnarray}
Really, multiplying both sides of (\ref{Q1}) by $t$, we get
\begin{eqnarray}
At^3-t^2-2t=0.\label{Q3}
\end{eqnarray}
From (\ref{Q2}) and (\ref{Q3}), we obtain
\begin{eqnarray}
(A^2+B^2)t^2-2At-4B^2=0.\label{Q4}
\end{eqnarray}
From (\ref{Q4}) and (\ref{Q1}), we get
$$\frac{2At+4B^2}{A^2+B^2}=\frac{t+2A}{A},\quad t(A^2-B^2)=2A(A^2-B^2),\quad t=2A,$$
because $A\neq \pm B$. Substituting $t=2A$ into (\ref{Q1}), we get $4A(A^2-1)=0$, i.e. a contradiction, because $A\neq 0$, $A\neq \pm 1$.

The lemma is proved.

\end{document}